\newcommand{\vol}{\mathbf {d}}
\begin{document}

\setcounter{page}{96}


\udk{531.38}

\title{ЭНТРОПИЯ ТЕРМОДИНАМИЧЕСКОГО ГРАФА}
      {Энтропия термодинамического графа}


\titleeng{Тhe entropy of a thermodynamic graph}


\author{А.\,Н. Курганский, А.\,Ю. Максимова}
\authoreng{O. Kurganskyy, A.\,J. Maksimova}

\date{27.06.16}




\address{ГУ <<Ин-т прикл. математики и механики>>, Донецк}


\email{kurgansk@gmx.de}


\maketitle

\begin{abstract}
%
%
В работе вводится алгоритмическая модель теплообмена на графе~--
термодинамический граф. Термодинамический граф является аналогом
сеток в методе конечных разностей: вычисление температур
осуществляется в вершинах графа, а ребра графа указывают на
непосредственный теплообмен между вершинами. Рекуррентные
соотношения теплообмена в графе выводятся без обращения к
дифференциальным уравнениям, а только опираясь на коэффициенты
теплопроводности и удельной теплоемкости. Такой подход
представляется авторам одновременно более коротким и гибким с точки
зрения алгоритмического моделирования термодинамического процесса,
чем вывод разностных схем из дифференциальных уравнений. Далее
вводится понятие энтропии термодинамического графа. Доказывается не
улучшаемая в общем случае оценка длины шага по времени, при котором
энтропия  не убывает. Как следствие, данная оценка является
одновременно точной границей устойчивости модели в общем случае.
\vspace{1mm}\\
\textbf{\emph {Ключевые слова: }}\itshape{теплоперенос теплопроводностью, теория графов, диаграмма Вороного, энтропия.}
\end{abstract}

\abstracteng{We introduce an algorithmic model of heat conduction,
the thermodynamic graph.   The thermodynamic graph is analogous to
meshes in the finite difference method in the sense that the
calculation of temperature is carried out at the vertices of the
graph, and the edges indicate the direct heat exchange between the
vertices. Recurrence relations of heat conduction in graph are
derived without using of differential equations and based on the
coefficients of thermal conductivity and heat capacity. This
approach seems to be more direct and flexible from the point of view
of algorithmic modeling of thermodynamic process than the derivation
of difference schemes from differential equations. We introduce also
the notion of entropy of thermodynamic graph. We find the maximum
length of the time step at which the entropy does not decrease in
the general case. As a result, this give us the accurate boundary of
the model stability.}


\kweng{heat transfer by conduction, graph theory, Voronoi diagram,
entropy.\vskip-5mm}


\Section{Введение} Математические модели физических или других
процессов в форме дифференциальных уравнений, основанных на языке
актуальной непрерывности и бесконечности, являются не всегда в
полной мере и ясно интерпретируемыми с алгоритмической точки зрения
при компьютерном моделировании, в котором речь может идти не более
чем о потенциальной, конструктивной непрерывности или бесконечности.
В связи с возникающими таким образом сложностями при построении
алгоритмических моделей физических процессов следует понимать, что
математические формулы с равенствами являются частным случаем
описания алгоритмов. Формулы для дискриминанта квадратного
трехчлена, разложения в ряды, дифференциальные уравнения,
описывающие теплообмен, ассоциативные, коммутативные законы и т.д. и
т.п. являются примерами выражений для вычислений или их
эквивалентных преобразований. Другими словами, конечные
алгебраические формулы, бесконечные ряды, частично-рекурсивные
функции, машины Тьюринга являются, в конце концов, языками для
выражения вычислений при компьютерном моделировании, являющемся во
многом конечной целью исследований. Определенным недостатком языков
частично-рекурсивных функций и машин Тьюринга, как понятий
алгоритма, является трудность их математического анализа
инструментами дифференцирования и интегрирования. Но, с точки зрения
компьютерных вычислений, они имеют явные преимущества в силу своей
природы. Таким образом, нет однозначных преимуществ в компьютерном
моделировании теплообмена, отталкивающегося именно от
дифференциальных уравнений. Это очевидно потому, что сами
дифференциальные уравнения обосновываются предельным переходом в
соотношениях, описывающих процессы теплообмена на элементарных
объемах, а разностные схемы – есть результат обратного процесса
дискретизации дифференциальных уравнений, т.е. возвращения к
элементарным объемам. Таким образом, путь к разностным схемам делает
два лишних шага: 1) предельного перехода от дискретного описания к
непрерывному и 2) обратной дискретизации от дифференциальных
уравнений к разностным схемам. При компьютерном моделировании, если
не ставится задача математического анализа качественных свойств
процесса, достаточно остановиться по пути к алгоритмическому
описанию конкретного процесса теплообмена на шаге до предельного
перехода. Конечно, при ясном понимании того, что влечет за собой
предельный переход, от каких моментов описываемого процесса в
результате этого перехода абстрагируются, а какие новые моменты
возникают уже в непрерывной форме – эти два шага дают новые знания
благодаря математическому анализу. Однако, по опыту компьютерного
моделирования многообразных процессов теплообмена, путь к
вычислительной программе, отталкивающийся не от дифференциальных
уравнений, а от того же, что и вывод самих уравнений, может
оказаться короче.

Исходя из сказанного выше, в данной работе дается определение модели
термодинамического графа без обращения к дифференциальным
уравнениям, исследуются его свойства, связанные с понятием энтропии
и сходимостью численного моделирования термодинамического графа.
Метод конечных разностей с использованием явной схемы является, как
показано в работе, частным случаем такой модели.  Термодинамический
граф служит основой для экспериментов при изучении процессов
теплообмена, рассматриваемых в~[1].

\Section{Термодинамический  граф} Пусть  $G=(V,E)$~-- конечный,
ориентированный, связный, простой граф, где  $V$~-- множество
вершин, $E\subseteq V\times V$~-- множество дуг или, другими
словами, отношение смежности на вершинах. Отношение  $E$ полагаем
симметричным. Напомним, что в случае ориентированного простого графа
принято говорить о дугах, т.е. об упорядоченных парах вершин
$(v,w)$, а в случае неориентированного графа – о ребрах графа, т.е.
о двухэлементных $\{v,w\}$ множествах вершин. Поскольку  $E$
является симметричным, то  $G$ можно рассматривать как
неориентированный граф, но для удобства описания некоторых моментов
теплообмена  $G$ определен как ориентированный граф. Иногда дуги
графа мы будем называть ребрами, когда ориентация дуги не важна.

На вершинах и дугах графа определим следующие функции:
\begin{enumerate}
    \item $u:V\times R^+\rightarrow R$~-- температура вершин (температурное поле),  $[K]$, т.е. $u(v,t)$ есть температура вершины  $v$ в момент времени  $t$;
    \item  $\rho:V\rightarrow R$~-- плотность вещества в вершинах графа,  $\left[{\mathit{\text{кг}}}/{\mathit{\text{м}}^3}\right]$;
    \item  $\vol{}:V\rightarrow R$~-- пространственный объем вершин,  $\left[\mathit{\text{м}}^3\right]$;
    \item $m:V\rightarrow R^{+}$~-- масса вершин графа,  $m(v)=\rho(v)\cdot\vol{}(v)$, $[\mathit{\text{кг}}]$;
    \item $c:V\rightarrow R^{+}\cup \{\infty\}$~-- удельная теплоемкость,
    $\left[{\mathit{\text{Дж}}}/({\mathit{\text{кг}}\cdot\mathit{\text{К}}})\right]$;
    \item  $\mu:V\rightarrow R$~-- теплота кристаллизации, $\left[{\mathit{\text{Дж}}}/{\mathit{\text{кг}}}\right]$;
    \item  $\mathit{melting}:V\rightarrow R$~-- температура плавления,  $[K]$;
    \item  $f:V\times R\rightarrow \{0,1\}$~-- фаза вершин во времени, договоримся, что  $f\left(v,t\right)=1$ означает жидкую фазу вершины  $v$ в момент времени  $t$,  $f\left(v,t\right)=0$~-- твердую;
    \item  $S:E\rightarrow R^+$~-- площади поверхности соприкосновения элементов вещества,
    $\left[\mathit{\text{м}}^2\right]$, т.е., если  $e=(v,w)$, то  $S(e)$~-- площадь поверхности теплообмена между  вершинами $v$ и  $w$;
    \item  $k:E\rightarrow R$~-- коэффициент теплопроводности,
    $\left[{\mathit{\text{Дж}}}/({\mathit{\text{м}}\cdot\mathit{\text{К}}\cdot\mathit{\text{с}}})\right]$;
    \item $\Delta x:E\rightarrow R$~-- расстояние между вершинами,  $[\mathit{\text{м}}]$, т.е., если
$e=(v,w)$, то  $\Delta x(e)$ является расстоянием между  $v$ и  $w$.
\end{enumerate}

Значение температур вершин в момент времени  $t=0$ назовем начальным
условием. Значение температур  вершин  $v$, в которых $c(v)=\infty$,
назовем граничным условием.

Пусть  $v\in V$. Обозначим множество всех смежных с  $v$ вершин в графе через $E(v)=\{w|(v,w) \in E \}$.

\begin{definition}[1]
Граф  $G$ с зафиксированными начальными и граничными условиями, а
также величинами  $c$,  $\rho$, $\vol{}$,  $\mu$,
$\mathit{melting}$,  $f$,  $S$,  $k$,  $\Delta x$, назовем графом
теплообмена или термодинамическим графом. Температурным полем на
графе назовем функцию $u$.
\end{definition}

Геометрические параметры графа теплообмена имеют следующий смысл.
При вычислительном моделировании теплообмена в некотором объеме
вещества, математически представляемого некоторым ограниченным
метрическим пространством $I$, считаем, что вершины графа
соответствуют точкам пространства, в которых вычисляются
температуры. Вершина графа и соответствующая ей точка пространства
$I$ отождествляются. Величина $\Delta x(v,w)$ есть расстояние между
вершинами (узлами)  $v$ и  $w$. Естественный способ привязки к
вершинам геометрических величин  $\vol{}$ и  $S$ следует из
построения диаграммы Вороного~[2] множества вершин графа. А именно,
считаем, что пространство  $I$ разбито на области близости
(многоугольники Вороного), каждая из которых содержит ровно одну
вершину графа и все точки пространства, расстояние от которых до
данной вершины меньше, чем до любой другой вершины графа. Точки
пространства  $I$, лежащие на равном расстоянии от двух или более
вершин графа, образуют поверхность теплообмена этих вершин. Величина
$\vol{}(v)$ есть мера (объем, площадь) многоугольника Вороного,
построенного вокруг вершины  $v$, а значение  $S(v,w)$ есть мера
поверхности соприкосновения многоугольников Вороного, построенных
вокруг  $v$ и  $w$. Иллюстрация к сказанному представлена на
рис.~\ref{fig:1}.

\begin{figure}[ht]
\centering
\includegraphics[scale=0.4, bb= 0mm 0mm 310mm 115mm, clip]{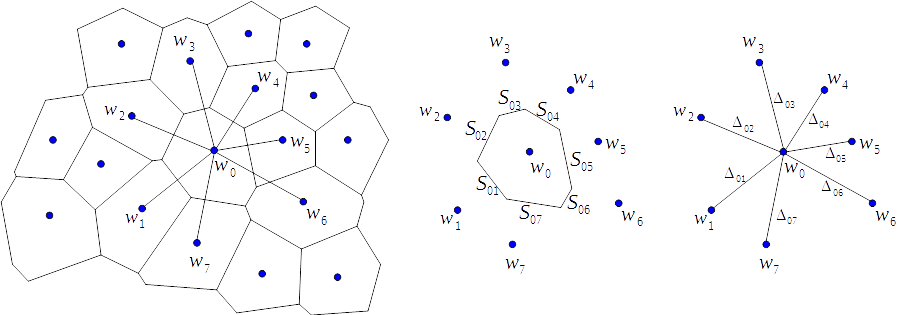}
\caption{Фрагмент диаграммы Вороного с указанием геометрических параметров вершин графа теплообмена}
\label{fig:1}
\end{figure}

 \Section{Уравнение теплообмена} Рассмотрим $v$ графа и
смежную с ней вершину  $w$ в момент времени  $t$. Количество теплоты
получаемой $v$ за время  $\Delta t$ в результате теплообмена с  $w$,
исходя из определения функции $k$, равно

\begin{equation*}
\Delta Q_{(v,w)}=\frac{u(w,t)-u(v,t)}{\Delta x(v,w)}k(v,w)S(v,w)\Delta t.
\end{equation*}
Количество теплоты получаемого вершиной  $v$ за это время в
результате теплового взаимодействия со всеми смежными вершинами
равно
\begin{equation*}
\Delta Q_v=\sum_{w\in E(v)}\frac{u(w,t)-u(v,t)}{\Delta x(v,w)}k(v,w)S(v,w)\Delta t.
\end{equation*}
Учитывая теплоту кристаллизации в случае смены фазы вершины,
получаем
\begin{equation*}
\Delta Q_v=\sum_{w\in E(v)}\frac{u(w,t)-u(v,t)}{\Delta x(v,w)}k(v,w)S(v,w)\Delta{t}+(f(v,t)-f(v,t+\Delta{t}))m(v)\mu(v).
\end{equation*}
С другой стороны, используя коэффициент теплоемкости
$c\left(v\right)$ как связь между изменением энергии вершины и
изменением ее температуры, получаем
\begin{equation*}
 \Delta Q_v=c(v)m(v)(u(v,t+\Delta t)-u(v,t)).
\end{equation*}
Отсюда следует схема для вычисления температуры вершины в следующий момент времени:
\begin{multline}
u(v,t+\Delta t)=u(v,t)+\frac{\Delta t}{c(v)m(v)}\sum_{w{\in}E(v)}\frac{u(w,t)-u(v,t)}{\Delta x(v,w)}k(v,w)S(v,w)+\\
+(f(v,t)-f(v,t+\Delta t))\frac{\mu (v)}{c(v)}.
\end{multline}


При отсутствии смен фаз уравнение теплообмена примет вид
\begin{equation*}
u(v,t+\Delta t)=u(v,t)+\frac{\Delta
t}{c(v)m(v)}\sum_{w{\in}E(v)}\frac{u(w,t)-u(v,t)}{\Delta
x(v,w)}k(v,w)S(v,w).
\end{equation*}

Если  $c(v)={\infty}$,  то $u(v,t+\Delta t)=u(v,t)$, т.е. $v$
является частью граничных условий, при которых температура в  $v$
постоянна.  Множество всех граничных вершин из $V$ обозначим через
${\partial}V$.

Обозначим  $Q(v,t)=c(v)m(v)u(v,t)+f(v,t)m(v)\mu (v)$. Назовем  $Q(v,t)$ тепловой энергией вершины
  $v$ в момент времени  $t$, а величину
\[
Q(G,t)=\sum_{v{\in}V{\partial}V}Q(v,t)
\]
-- тепловой энергией графа  $G$ в момент времени  $t$. Используя
выражение для тепловой энергии в уравнении теплопроводности,
получаем
\begin{equation*}
Q(v,t+\Delta t)=Q(v,t)+\Delta
t\sum_{w{\in}E(v)}\frac{u(w,t)-u(v,t)}{\Delta x(v,w)}k(v,w)S(v,w).
\end{equation*}
\begin{theorem}[1]
Если в графе нет граничных вершин,  ${\partial}V={\emptyset}$, то тепловая энергия  $Q(G,t)$
постоянна во времени.
\end{theorem}
\begin{proof}
Доказательство следует из простого факта
\[
\sum_{v{\in}V}\sum_{w{\in}E(v)}\frac{u(w,t)-u(v,t)}{\Delta x(v,w)}k(v,w)S(v,w)=0,
\]
верного в силу равенств
$S(v,w)=S(w,v)$,  $k(v,w)=k(w,v)$,  $\Delta x(v,w)=\Delta x(w,v)$.
\end{proof}

Далее рассматриваем только уравнение теплообмена без смены фаз.
Пронумеруем вершины  $V$ графа  $G$ числами  $\{1,2,{\dots},|V|\}$.
Пусть  $|V|=n$. Обозначим через  $i$ вершину с номером  $i$, а через
$u_i(t)$ -- температуру вершины  $i$ в момент времени  $t$. Введем
обозначение вектор-столбца:
\[
U(t)=\left(\begin{matrix}u_1(t)\\u_2(t)\\{\dots}\\u_n(t)\end{matrix}\right).
\]

Уравнения теплообмена без фазовых переходов являются линейным
преобразованием поля  $u$. Таким образом, мы можем записать
$U(t+\Delta t)=AU(t)$, где матрица
\begin{equation*}
A=\left(\begin{matrix}\alpha_{11}&{\dots}&\alpha_{1n}\\{\dots}&{\dots}&{\dots}\\\alpha_{n1}&{\dots}&
\alpha_{nn}\end{matrix}\right),
\end{equation*}
которую назовем матрицей теплообмена, обладает очевидными
свойствами, сформулированными в следующей лемме. \pagebreak

\begin{lemma}[1]
Матрица  $A$ обладает следующими свойствами:
\begin{enumerate}
    \item  $\alpha_{ij}=\dfrac{k(i,j)S(i,j)\Delta t}{c(i)m(i)\Delta
x(i,j)}{\geq}0$, если  $j{\in}E(i)$;
    \item  $\alpha_{ij}=0$, если  $j{\notin}E(i)$ и  $j{\neq}i$;
    \item  $\alpha_{ii}=1-\sum_{j{\in}E(i)}\alpha_{ij}$;
    \item $\sum_{j=1}^n\alpha_{ij}=1$;
    \item если  $c(i)<{\infty}$,  $c(j)<{\infty}$, то  $p(i)\alpha_{ij}=p(j)\alpha_{ji}$; \label{lemma:5}
    \item если  $i$ является граничной вершиной, то  $\alpha_{ii}=1$.
\end{enumerate}
\end{lemma}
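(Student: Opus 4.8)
The plan is to obtain $A$ by reading the coefficients off the recurrence relation for $u$ and then checking the six items one at a time; five of them are essentially immediate, and only item~\ref{lemma:5} needs anything beyond the bare form of the recurrence.

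First I would split into two cases according to whether vertex $i$ lies on the boundary. If $i\in\partial V$, i.e. $c(i)=\infty$, then, as already observed, $u_i(t+\Delta t)=u_i(t)$ for every state, so the $i$-th row of $A$ is the $i$-th coordinate row vector: $\alpha_{ii}=1$ and $\alpha_{ij}=0$ for $j\ne i$. This gives item~6 and makes items 1--4 hold trivially for such $i$. If $c(i)<\infty$, grouping the terms that multiply $u_j(t)$ in the heat-conduction recurrence yields, for every state $U(t)$,
\[
u_i(t+\Delta t)=u_i(t)\Bigl(1-\frac{\Delta t}{c(i)m(i)}\sum_{j\in E(i)}\frac{k(i,j)S(i,j)}{\Delta x(i,j)}\Bigr)+\sum_{j\in E(i)}\frac{k(i,j)S(i,j)\,\Delta t}{c(i)m(i)\,\Delta x(i,j)}\,u_j(t).
\]
Since $U(t+\Delta t)=AU(t)$ is required for all states, these coefficients are exactly the entries of the $i$-th row of $A$. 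This is precisely the formula in item~1; nonnegativity holds because $\Delta t>0$, the numerator $k(i,j)S(i,j)$ is nonnegative, and the denominator $c(i)m(i)\Delta x(i,j)$ is positive (here $c(i)<\infty$ forces $c(i)\in R^{+}$). Item~2 holds because no term $u_j(t)$ with $j\notin E(i)$, $j\ne i$, occurs; and the parenthesized factor equals $1-\sum_{j\in E(i)}\alpha_{ij}$, which is item~3. Item~4 then follows by summing the $i$-th row: the only nonzero entries are $\alpha_{ii}$ and the $\alpha_{ij}$ with $j\in E(i)$, and by item~3 they sum to $1$, so $A$ is row-stochastic.

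For item~\ref{lemma:5}, take $i$ and $j$ both interior and write $p(i)=c(i)m(i)$ for the heat capacity of vertex~$i$. Multiplying the expression for $\alpha_{ij}$ by $p(i)$ clears the denominator and gives $p(i)\alpha_{ij}=k(i,j)S(i,j)\,\Delta t$; the same computation with $i$ and $j$ swapped gives $p(j)\alpha_{ji}=k(j,i)S(j,i)\,\Delta t$. The two right-hand sides agree because the edge data is symmetric — $k(i,j)=k(j,i)$, $S(i,j)=S(j,i)$, $\Delta x(i,j)=\Delta x(j,i)$ — which is the same symmetry used in the proof of Theorem~1; hence $p(i)\alpha_{ij}=p(j)\alpha_{ji}$, i.e. $A$ is reversible with respect to the weights $p$. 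I do not expect a genuine obstacle here: the only points needing care are keeping the interior and boundary cases apart (so that $c(i)m(i)$ in the denominators is finite and nonzero) and, in item~\ref{lemma:5}, explicitly invoking the undirected symmetry of $S$, $k$ and $\Delta x$.
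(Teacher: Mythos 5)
Your proof is correct and follows the route the paper itself implicitly relies on (the lemma is stated there without proof, as an immediate reading-off of the coefficients of the recurrence, with item~5 resting on the symmetry $k(i,j)=k(j,i)$, $S(i,j)=S(j,i)$, $\Delta x(i,j)=\Delta x(j,i)$ already used in Theorem~1). One cosmetic slip: in item~5 the product $p(i)\alpha_{ij}$ equals $k(i,j)S(i,j)\Delta t/\Delta x(i,j)$, not $k(i,j)S(i,j)\Delta t$ — the factor $\Delta x(i,j)$ does not cancel — but since you explicitly invoke the symmetry of $\Delta x$ as well, the conclusion is unaffected; note also that the paper's $p(i)$ is $c(i)m(i)/\Phi(G)$, which differs from your $c(i)m(i)$ only by a constant that cancels from both sides.
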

И справедлива следующая теорема.
\begin{theorem}[2]
При достаточно малых $\Delta t\geq 0$ определитель матрицы
 теплообмена отличен от $0$ и, следовательно, она обратима.
\end{theorem}
\begin{proof}
При $\Delta t=0$  матрица $A$ является единичной. В силу
непрерывности определителя как функции элементов матрицы,
определитель матрицы теплообмена отличен от нуля при достаточно
малых $\Delta t\geq 0$.
\end{proof}

\Section{Энтропия термодинамического графа}
%
Величину $\phi(v)=c(v)\rho(v)$ назовем плотностью теплоемкости в
вершине $v$, а $\phi(v)\vol{}(v)$~-- теплоемкостью вершины $v$.
Величину
\[
\Phi(G)=\sum_{v{\in}V\setminus{\partial}V}\phi(v)\vol{}(v)
\]
назовем теплоемкостью графа, а величину
$p(v)=({\phi(v)\vol{}(v)})/{\Phi(G)}$ -- взвешенной теплоемкостью
вершины  $v$.

Средней температурного поля  $u$ и температурой графа  $G$ в момент
времени  $t$ назовем величину
\[
M_u(t)=\sum_{v{\in}V\setminus{\partial}V}u(v,t)p(v)=
\dfrac{\sum_{v{\in}V\setminus{\partial}V}u(v,t)\phi(v)\vol{}(v)}{\Phi(G)}.
\]
Обозначим через
\begin{equation}\label{eqkurg:D}
D_u(t)=\sum_{v{\in}V\setminus{\partial}V}|u(v,t)-M_u(t)|p(v)=
\dfrac{\sum_{v{\in}V\setminus{\partial}V}|u(v,t)-M_u(t)|\phi(v)\vol{}(v)}{\Phi(G)}
\end{equation}
средневзвешенное абсолютное отклонение температурного поля  $u$. Заметим, что  $D_u(t)=0$ тогда и только
тогда, когда тепловое поле в  $V\setminus{\partial}V$ константное.

Величину
\begin{equation*}
N_u(t)=\frac{D_u(t)}{M_u(t)}\Phi(G)=\frac{\sum_{v{\in}V\setminus{\partial}V}|u(v,t)-M_u(t)|\phi(v)\vol{}(v)}{M_u(t)}
\end{equation*}
назовем негэнтропией поля  $u$ в момент времени  $t$, а величину
\begin{equation*}
\Delta S_u(t)=N_u(t)-N_u(t+\Delta t)
\end{equation*}
-- разностной формой энтропии теплового поля  $u$ в момент времени
$t$. Если тепловое поле является константным, то
$N_u(t){\equiv}\Delta S_u(t){\equiv}0$ для всех  $t$.

\begin{corollary}[1]
Негэнтропия  термодинамического графа равна $0$ тогда и только
тогда, когда тепловое поле на не граничных вершинах графа
константное.
\end{corollary}

\begin{theorem}[3]
Для термодинамического графа выполняются неравенства
\[
0\le N_u(t)<2\Phi(G).
\]
\end{theorem}\vskip-4mm
\begin{proof}
Рассмотрим формулу~(\ref{eqkurg:D}), в которой для упрощения
обозначений опущен параметр $t$ и $M=M_u$:
\begin{equation*}
D_u=\sum_{i{\in}V\setminus{\partial}V}|u(i)-M|p(i).
\end{equation*}
Пусть две вершины $v$  и $w$ такие, что $u(v)>M$, $0<u(w)\le M$.
Рассмотрим тепловое  поле $u'$ на графе $G$, которое совпадает с $u$
во всех вершинах графа, кроме $v$ и $w$. Пусть $u'(v)=u(v)+
({u(w)p(w)})/{p(u)}$ и  $u'(w)=0$. Значения $u'$ подобраны так, что
$M_{u'}=M$. Поскольку
\begin{multline*}
(u(v)-M)p(v)+(M-u(w))p(w)=\\
=\left(u(v)-\frac{u(w)p(w)}{p(v)}-M\right)p(v)+Mp(w)<\\
<\left(u(v)+\frac{u(w)p(w)}{p(v)}-M\right)p(v)+Mp(w),
\end{multline*}\vskip-5mm
то
\begin{multline*}
D_{u}=\\
=(u(v)-M)p(v)+(M-u(w))p(w)+\sum_{i{\in}V\setminus\left({\partial}V\cup\{v,w\}\right)}|u(i)-M_u|p(i)<\\
<\left(u(v)+\frac{u(w)p(w)}{p(v)}-M\right)p(v)+Mp(w)+\sum_{i{\in}V\setminus\left({\partial}V\cup\{v,w\}\right)}|u(i)-M_u|p(i)=\\
=D_{u'}.
\end{multline*}

Таким образом, если температурное поле $u$ графа такое, что есть
вершина $v$,  в которой температура больше $0$ и меньше $M$, то поле
можно преобразовать с сохранением средней температуры так, что
негэнтропия графа увеличится.

Пусть теперь две вершины $v$  и $w$ такие,  что $u(v)>M$, $u(w) >
M$. Рассмотрим тепловое поле $u'$ на графе $G$, которое совпадает с
$u$ во всех вершинах графа, кроме $v$ и $w$. Пусть $u'(v)=u(v)+
({u(w)-M})/({p(u)}) \, p(w)$ и  $u'(w)=M$. Значения $u'$ подобраны
так, что $M_{u'}=M$. Поскольку

\begin{multline*}
\left(u(v)-M\right)p(v)+\left(u(w)-M\right)p(w)=\left(u(v)+\frac{u(w)-M}{p(v)}p(w)-M\right)p(v)+Mp(w),
\end{multline*}
\noindent то 
\begin{multline*}
D_{u}=
\left(u(v)-M\right)p(v)+\left(u(w)-M\right)p(w)+\sum_{i{\in}V\setminus\left({\partial}V\cup\{v,w\}\right)}|u(i)-M_u|p(i)=\\
=\left(u(v)+\frac{u(w)-M}{p(v)}p(w)-M\right)p(v)+0p(w)+\sum_{i{\in}V\setminus\left({\partial}V\cup\{v,w\}\right)}|u(i)-M_u|p(i)=\\
=D_{u'}.
\end{multline*}\vskip-2mm
\noindent Отсюда следует, что максимальная негэнтропия графа
достигается только, если температурное поле такое, что во всех
вершинах, кроме одной, температура равна $0$. Пусть вершина $v$
имеет температуру ${M}/{p(v)}$, а температура во всех других
вершинах равна $0$. Тогда средняя температура поля равна $M$ и
\begin{equation*}
D_{u}=\left(\frac{M}{p}-M\right)p(v)+\sum_{i{\in}V\setminus\left({\partial}V\cup\{v\}\right)}Mp(i)=2M(1-p(v))<2M.
\end{equation*}
Отсюда $N_u(t)=({D_u(t)})/({M_u(t)}) \, \Phi(G)<2\Phi(G)$.
\end{proof}

Исходя из теоремы, естественно дать следующее определение энтропии.
\begin{definition}[2]
Энтропией термодинамического графа $G$ назовем величину
\[S_u(t)=2\Phi(G)-N_u(t).\]
\end{definition}
\vskip-6mm

\begin{corollary}[2]
Справедливы неравенства: $0< S_u(t)\le 2\Phi(G)$.
\end{corollary}

\begin{theorem}[4]
Пусть в термодинамическом графе $G=(V,E)$ средняя температура двух подмножеств его вершин  $V'$ и $V''$,  $V=V'\cup V''$, совпадают. Тогда энтропия графа $G$ равна сумме энтропий его подграфов $G'$ и $G''$, образованных, соответственно, вершинами $V'$ и $V''$.
\end{theorem}
\begin{proof}
Достаточно доказать указанное аддитивное свойство для негэнтропии.
Доказательство прямо следует из ее определения
\[
N_u(t)=\frac{\sum_{v{\in}V\setminus{\partial}V}|u(v,t)-M_u(t)|\phi(v)\vol{}(v)}{M_u(t)}
\]\vskip-4mm
\noindent и равенства
\begin{multline*}
\sum_{v{\in}V\setminus{\partial}V}|u(v,t)-M_u(t)|\phi(v)\vol{}(v)=\\[-2mm]
=\sum_{v{\in}V'\setminus{\partial}V}|u(v,t)-M_u(t)|\phi(v)\vol{}(v)+\sum_{v{\in}V''\setminus{\partial}V}|u(v,t)-M_u(t)|\phi(v)\vol{}(v).
\end{multline*}
\end{proof}

\Section{Сходимость}
Определение сходимости в настоящей работе отличается от данного
в~[3], поскольку определение термодинамического графа и соотношений
теплообмена на нем не отталкивается от дифференциальных уравнений.
Устойчивость модели, как непрерывная зависимость от начальных
условий, следует непосредственно из сходимости модели, определяемой
ниже.

Вычислительную модель графа теплообмена при шаге по времени  $\Delta t$ назовем сходящейся, если для любой вершины  $v$
существует  $\lim_{t\rightarrow {\infty}}u(v,t)$. Условие сходимости в виде ограничений на  $\Delta t$  выводится
из простого соображения, являющегося огрубленной формой второго закона термодинамики: температура в любой
вершине  $v$ в следующий момент времени  $t+\Delta t$ должна быть строго больше минимальной и строго меньше
максимальной температур смежных с ней вершин  $E(v)$ в момент времени  $t$ при отсутствии смены фазы. Это
ограничение выражается неравенством $\frac{\Delta t}{c(v)m(v)}\sum_{w{\in}E(v)}\frac{k(v,w)S(v,w)}{\Delta x(v,w)}<1$, для всех  $v{\in}V$,
или, что то же,  $\sum_{j{\in}E(i)}\alpha_{ij}<1$ для всех  $i$. Или, другими словами,
\begin{equation}
\Delta t<\underset{v{\in}V}{\min}\frac{c(v)m(v)}{\sum_{w{\in}E(v)}{\dfrac{k(v,w)S(v,w)}{\Delta x(v,w)}}}.
\label{eq:2}
\end{equation}

Ниже будет доказана сходимость модели при выполнении условия~(\ref{eq:2}). Заметим, что можно привести пример графа теплообмена
такого, что при
\[
\Delta t=\underset{v{\in}V}{\min}\frac{c(v)m(v)}{\sum_{w{\in}E(v)}\dfrac{k(v,w)S(v,w)}{\Delta x(v,w)}}
\]
модель не сходится. В простейшей такой модели граф состоит из двух вершин, а параметры подобраны так, что
$A=\left(\begin{matrix}0&1\\1&0\end{matrix}\right)$. Таким образом, оценку для  $\Delta t$ в условии~(\ref{eq:2}) для сходимости модели в общем случае нельзя улучшить.
\begin{lemma}[2]
Если верно неравенство~(\ref{eq:2}),  то в матрице  $A$ для всех
$i$ и  $j$  $\alpha_{{ij}}{\geq}0$ и  $\alpha_{{ii}}>0$, а также
значение  $1$ в матрице могут принимать только элементы на главной
диагонали.
\end{lemma}

Доказательство, в силу уже сказанного, очевидно. Заметим, что такая матрица, все элементы которой неотрицательные и
суммы элементов каждой строки равны  $1$, называется стохастической или марковской.

Далее, при упоминании векторов, речь будет идти о нормированном
линейном пространстве с нормой Чебышева, равной максимальному
абсолютному значению компонент вектора.
\begin{theorem}[5]
Если матрица  $A$ теплообмена графа без граничных вершин является стохастической, то негэнтропия не
константного теплового поля $u$, убывая, стремится к  $0$ и, следовательно,  $\Delta S_u(t){\geq}0$.
\end{theorem}

\begin{proof}
Требуется доказать, что  $D_u(t){\geq}D_u(t+\Delta t)$ и
$\lim_{\tau\rightarrow {\infty}}D_u(t+$ $+\tau \Delta t)=0$, где
$\tau {\in}N$. Обозначим век\-тор-стол\-бец теплового поля в момент
времени  $t$ через
 \[
U=\left(\begin{matrix}u_1\\u_2\\{\dots}\\u_n\end{matrix}\right).
\]
Тогда  $M_u(t+\Delta
t)=\sum_{i=1}^np_i\sum_{j=1}^n\alpha_{ij}u_j=\sum_{i=1}^n\sum_{j=1}^np_i\alpha_{{ij}}u_j$.
В силу свойства~\ref{lemma:5} матрицы теплообмена и ее
стохастичности, имеем далее следующие равенства:
\begin{equation*}
\sum_{i=1}^n\sum_{j=1}^np_i\alpha_{{ij}}u_j=\sum_{i=1}^n\sum_{j=1}^np_j\alpha_{{ji}}u_j
=\sum_{j=1}^np_ju_j\sum_{i=1}^n\alpha_{{ji}}=\sum_{j=1}^np_ju_j=M_u(t).
\end{equation*}
Таким образом,  $M_u(t)$ константа при условиях теоремы. Обозначим  $M=M_u(t)$. Тогда
\begin{equation*}
D_u(t+\Delta t)=\sum_{i=1}^np_i|\sum_{j=1}^n\alpha_{{ij}}u_j-M|=\sum
_{i=1}^np_i|\sum_{j=1}^n\alpha_{{ij}}(u_j-M)|{\geq}
\end{equation*}
\begin{equation*}
{\geq}\sum_{i=1}^n\sum_{j=1}^np_i\alpha_{{ij}}|(u_j-M)|=\sum_{i=1}^n\sum_{j=1}^np_j\alpha
_{{ji}}|(u_j-M)|=
\end{equation*}
\begin{equation*}
=\sum_{j=1}^n\sum_{i=1}^np_j\alpha_{{ji}}|(u_j-M)|=\sum_{j=1}^np_j|(u_j-M)|\sum
_{i=1}^n\alpha_{{ji}}=
\end{equation*}
\begin{equation*}
=\sum_{j=1}^np_j|(u_j-M)|=D_u(t).
\end{equation*}

Таким образом, мы доказали, что  $D_u(t+\Delta t){\geq}D_u(t)$.
Покажем, что  $\lim_{\tau \rightarrow {\infty}}D_u(t+$ $+\tau
{\cdot}\Delta t)=0$. Во-первых, очевидно, что $D_u(t)=0$ тогда и
только тогда, когда поле  $u$ константное, т.е. во всех вершинах
значение температуры равно  $M$. Константное поле не меняется со
временем, т.е.  $D_u(t+\tau {\cdot}\Delta t)=0$. Предположим, что в
последовательности  $D_u(t+\tau {\cdot}\Delta t)$ нет нулевых
элементов и $\lim_{\tau \rightarrow{\infty}}D_u(t+\tau {\cdot}\Delta
t)=$ $=d>0$.

Рассмотрим последовательность  $A^{\tau }U$,  $\tau {\in}N$. Заметим, что это ограниченная последовательность и оператор
 $A$ непрерывен, поскольку линейный и конечномерный. Заметим также, что неподвижными точками  $A$ являются только
константные тепловые поля. По известной теореме Боль\-цано--Вей\-ер\-штрасса в  $A^{\tau }U$ можно выбрать сходящуюся
подпоследовательность, пределом которой будет тепловое поле 1) со средневзвешенным абсолютным отклонением равным  $d$ и
2) являющееся неподвижной точкой оператора  $A$. Противоречие.
\end{proof}
\begin{corollary}[3]
Если в графе теплообмена нет граничных вершин и матрица  $A$ стохастическая, то модель сходится к
константному тепловому полю.
\end{corollary}
\begin{corollary}[4]
Если $A$  является матрицей теплообмена графа без граничных вершин, то
\[
\lim_{\tau\rightarrow {\infty}}A^{\tau}=
\left(\begin{matrix}1/n&{\dots}&1/n\\{\dots}&{\dots}&{\dots}\\1/n&{\dots}&1/n\end{matrix}\right).
\]
\end{corollary}

Рассмотрим теперь случай графа с граничными вершинами.
\begin{theorem}[6]
Если матрица  $A$ теплообмена на графе  $G$ с непустым множеством
граничных вершин стохастическая, то модель сходится, причем
\linebreak $\lim_{\tau\rightarrow {\infty}}A^{\tau }U$ зависит
только от значений температуры в граничных вершинах.
\end{theorem}
\begin{proof}
Без ограничений общности положим, что подграф графа  $G$, образованный его неграничными вершинами, является связным. В
противном случае, можно рассмотреть каждую компоненту связности отдельно.

Пусть  $|{\partial}V|=m>0$. Также без ограничений общности для
удобства считаем, что ${\partial}V=\{1,2,{\dots},m\}$. Заметим, что
в этом случае в матрице  $A$ в первых  $m$ строках диагональные
элементы равны  $1$, а все остальные элементы этих строк равны  $0$.
В силу того, что подграф, образованный неграничными вершинами,
связный, справедливо, что через максимум  $n-1$ шагов по времени
каждая неграничная вершина графа испытывает влияние всех других
вершин, т.е. в матрице  $A^{\tau }$,  $\tau {\geq}n-1$, в каждой
строке с номером больше  $m$ все элементы строго больше $0$.

Рассмотрим пространство  $U$-мер\-ных векторов, в которых первые $m$
элементов равны $0$. Это пространство замкнуто относительно
оператора  $A$. Покажем, что  $A^{\tau}=(\alpha_{ij}^{\tau})$, для
любого  $\tau{\geq}n-1$, является сжимающим отображением в  $U$,
т.е. существует  $0{\leq}\lambda_{\tau}<1$, что для любого $V{\in}U$
справедливо  $\|A^{\tau }V\|{\leq}\lambda_{\tau}\|V\|$. Под нормой
понимаем норму Чебышева, т.е. максимальное абсолютное значение
координат вектора. Действительно, пусть
$\lambda_{\tau}=\underset{m<i{\leq}n}{\max}|\sum_{j=m+1}^n\alpha_{{ij}}^{\tau
}|$. Очевидно, что $\lambda_{\tau }<1$, поскольку сумма элементов
любой строки матрицы не больше  $1$ и все элементы рассматриваемых
строк строго больше  $0$, а в сумме участвует только часть элементов
строк. Учитывая, что первые  $m$ координат векторов из  $U$ равны
$0$, получаем для  $V{\in}U$
\[
\|A^{\tau }V\|=\underset{m<i{\leq}n}{\max}|\sum_{j=m+1}^n\alpha_{{ij}}^{\tau
}v_j|{\leq}\|V\|{\cdot}\underset{m<i{\leq}n}{\max}|\sum_{j=m+1}^n\alpha
_{{ij}}^{\tau }|=\lambda_{\tau }\|V\|.
\]

Рассмотрим множество всех векторов вида  $M_i=\{A^iU-A^jU| i<j\}$. Радиусом
$\|M_i\|$ множества  $M_i$ векторов назовем максимальную норму среди его векторов. Заметим, что
$M_{i+1}=AM_i$. В силу того, что  $A^{n-1}$ сжимающее отображение, радиус  $\|M_i\|$ стремится к  $0$ при
$i\rightarrow\infty$. Как известно, это является необходимым и достаточным условием существования  $\lim_{\tau\rightarrow {\infty}}A^{\tau }U$.

Представим теперь  $U$ как сумму двух векторов $U=U'+U''$ таких, что
в  $U'$ первые  $m$ компонент совпадают с соответствующими
компонентами  $U$, а остальные равны  $0$. Другими словами, в
$U'$~-- температура в граничных вершинах, \linebreak $U''$ ~--
температура в остальных. Поскольку  $U''{\in}U$, то
$\lim_{\tau\rightarrow{\infty}}A^{\tau }U''=0$, где  $0$~-- нулевой
вектор. Таким образом, $\underset{\tau\rightarrow{\infty}}{\lim
}A^{\tau}U=\lim_{\tau\rightarrow {\infty}}A^{\tau}U'$.
\end{proof}
\vskip2mm

\Section{Частный случай} Если вершины графа расположить не случайным
образом, а в узлах конечно-разностной сетки, диаграмма Вороного
будет состоять из ячеек-прямоугольников, каждая (внутренняя – не
лежащая на границе рассматриваемой области) из которых будет иметь
четыре соседние ячейки. Соединим вершины соседних ячеек ребрами
(рис.~\ref{figkur:2}).

\begin{figure}[ht]
\centering
\includegraphics[scale=0.4, bb= 0mm 0mm 100mm 100mm, clip]{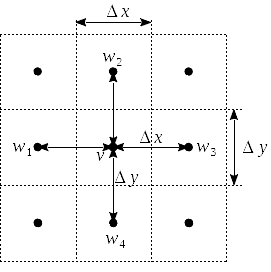}
\caption{Метод конечных разностей как частный случай теплообмена на графе}
\label{figkur:2}
\end{figure}
Обозначениям в уравнении теплообмена на графе соответствуют следующие величины:
\begin{equation*}
\Delta x(v,w_1)=\Delta x(v,w_3)=\Delta x
\end{equation*}
\begin{equation*}
\Delta x(v,w_2)=\Delta x(v,w_4)=\Delta y
\end{equation*}
\begin{equation*}
\vol{}(v)=\Delta x{\cdot}\Delta y
\end{equation*}
\begin{equation*}
S(v,w_1)=S(v,w_3)=\Delta y
\end{equation*}
\begin{equation*}
S(v,w_2)=S(v,w_4)=\Delta x.
\end{equation*}
Переписывая основное уравнение (1) в приведенных обозначениях,
получим явную конеч\-но-разност\-ную схему:
\begin{multline}
u(v,t+\Delta t)=u(v,t)+\Delta t\frac k{c{\cdot}\rho}\left(\frac{u(w_1,t)-2u(v,t)+u(w_3,t)}{\Delta x^2}\right.+\\
+\left.\frac{u(w_2,t)-2u(v,t)+u(w_4,t)}{\Delta y^2}\right).
\end{multline}

\Section{Заключение} В работе определена алгоритмическая модель
теплообмена на графе без опоры на дифференциальные уравнения.
Коне\-чно-раз\-ност\-ная явная схема (4) является частным случаем
такой модели. Для процесса теплообмена без смены фаз доказана не
улучшаемая  в общем случае оценка длины шага по времени, при котором
модель (1) сходится.  При отсутствии граничных вершин доказательство
сходимости основывается на определенном в работе понятия энтропии
термодинамического графа и его свойстве возрастать при указанном
временном шаге. При наличии граничных вершин сходимость доказывается
исходя из свойства матрицы теплообмена быть марковской.

Благодаря тому, что современные языки программирования,  как
например Python, содержат библиотеки для построения диаграмм
Вороного  произвольного множества точек, термодинамический граф
обладает определенной гибкостью в его использовании при
моделировании теплообмена в средах со сложной геометрической
конфигурацией как самой среды, так и граничных условий. Более того,
вместо регулярной, прямоугольной сетки узлов  расчета температур,
можно использовать сетку произвольной геометрии. В частности, в
дальнейших работах предполагается рассмотреть сетки узлов,
расположенных случайным образом с равномерной плотностью
распределения, с целью проведения сравнительных численных
экспериментов в прямоугольных и изотропных сетках узлов.

В дальнейшем предполагается применить термодинамический граф для компьютерного моделирования теплообмена со сменой фаз, для чего  и разрабатывалась модель~[4].

\end{document}